\let\llncssubparagraph\subparagraph
\let\subparagraph\paragraph
\let\subparagraph\llncssubparagraph
\newcolumntype{Y}{>{\centering\arraybackslash}X}
\DeclareMathOperator*{\argmax}{argmax}
\renewcommand{\ALG@name}{Heuristic}
\begin{document}
	\title{Edge User Allocation with Dynamic Quality of Service\thanks{This manuscript has been accepted for publication at the 17th International Conference on Service-Oriented Computing and may be published in the book series Lecture Notes in Computer Science. All copyrights reserved to Springer Nature Switzerland AG, Gewerbestrasse 11, 6330 Cham, Switzerland.}}
	\titlerunning{}
	
	\author{Phu Lai\inst{1} \and Qiang He\inst{1}$^{\textrm{(\Letter)}}$ \and Guangming Cui\inst{1} \and Xiaoyu Xia\inst{2} \and Mohamed Abdelrazek\inst{2} \and Feifei Chen\inst{2} \and John Hosking\inst{4} \and John Grundy\inst{3} \and Yun Yang\inst{1}}
	
	\authorrunning{P. Lai et al.}
	\institute{
		Swinburne University of Technology, Hawthorn, Australia \\
		\email{\{tlai,qhe,gcui,yyang\}@swin.edu.au} \and
		Deakin University, Burwood, Australia \\
		\email{\{xiaoyu.xia,mohamed.abdelrazek,feifei.chen\}@deakin.edu.au} \and
		Monash University, Clayton, Australia \\
		\email{john.grundy@monash.edu} \and
		The University of Auckland, Auckland, New Zealand \\
		\email{j.hosking@auckland.ac.nz}
	}
	
	\maketitle
	
	\begin{abstract}
		In edge computing, edge servers are placed in close proximity to end-users. App vendors can deploy their services on edge servers to reduce network latency experienced by their app users. The edge user allocation (EUA) problem challenges service providers with the objective to maximize the number of allocated app users with hired computing resources on edge servers while ensuring their fixed quality of service (QoS), e.g., the amount of computing resources allocated to an app user. In this paper, we take a step forward to consider dynamic QoS levels for app users, which generalizes but further complicates the EUA problem, turning it into a dynamic QoS EUA problem. This enables flexible levels of quality of experience (QoE) for app users. We propose an optimal approach for finding a solution that maximizes app users' overall QoE. We also propose a heuristic approach for quickly finding sub-optimal solutions to large-scale instances of the dynamic QoS EUA problem. Experiments are conducted on a real-world dataset to demonstrate the effectiveness and efficiency of our approaches against a baseline approach and the state of the art.
		
		\keywords{Resource allocation \and Edge computing \and Quality of Service \and Quality of Experience \and User allocation}
	\end{abstract}
	
	\section{Introduction} \label{sec:introduction}
	Mobile and Internet-of-Things (IoT) devices, including mobile phones, wearables, sensors, etc., have become extremely popular in modern society \cite{cerwall2018ericsson}. The rapid growth of those devices have increased the variety and sophistication of software applications and services such as facial recognition \cite{soyata2012cloud}, interactive gaming \cite{chen2015decentralized}, real-time, large-scale warehouse management \cite{ding2008application}, etc. Those applications usually require intensive processing power and high energy consumption. Due to the limited computing capabilities and battery power of mobile and IoT devices, a lot of computing tasks are offloaded to app vendors' servers in the cloud. However, as the number of connected devices is skyrocketing with the continuously increasing network traffic and computational workloads, app vendors are facing the challenge of maintaining a low-latency connection to their users. 
	
	Edge computing -- sometimes often referred to as \emph{fog computing} -- has been introduced to address the latency issue that often occurs in the cloud computing environment \cite{bonomi2012fog}. A usual edge computing deployment scenario involves numerous edge servers deployed in a distributed manner, normally near cellular base stations \cite{hu2015mobile}. This network architecture significantly reduces end-to-end latency thanks to the close proximity of edge servers to end-users. The coverage areas of nearby edge servers usually partially overlap to avoid non-serviceable areas -- the areas in which users cannot offload tasks to any edge server. A user located in the overlapping area can connect to one of the edge servers covering them (\emph{proximity constraint}) that has sufficient computing resources (\emph{resource constraint}) such as CPU, storage, bandwidth, or memory. Compared to a cloud data-center server, a typical edge server has very limited computing resources, hence the need for an effective and efficient resource allocation strategy.
	
	\begin{figure}
		\centering
		\includegraphics[page=1,scale=0.32]{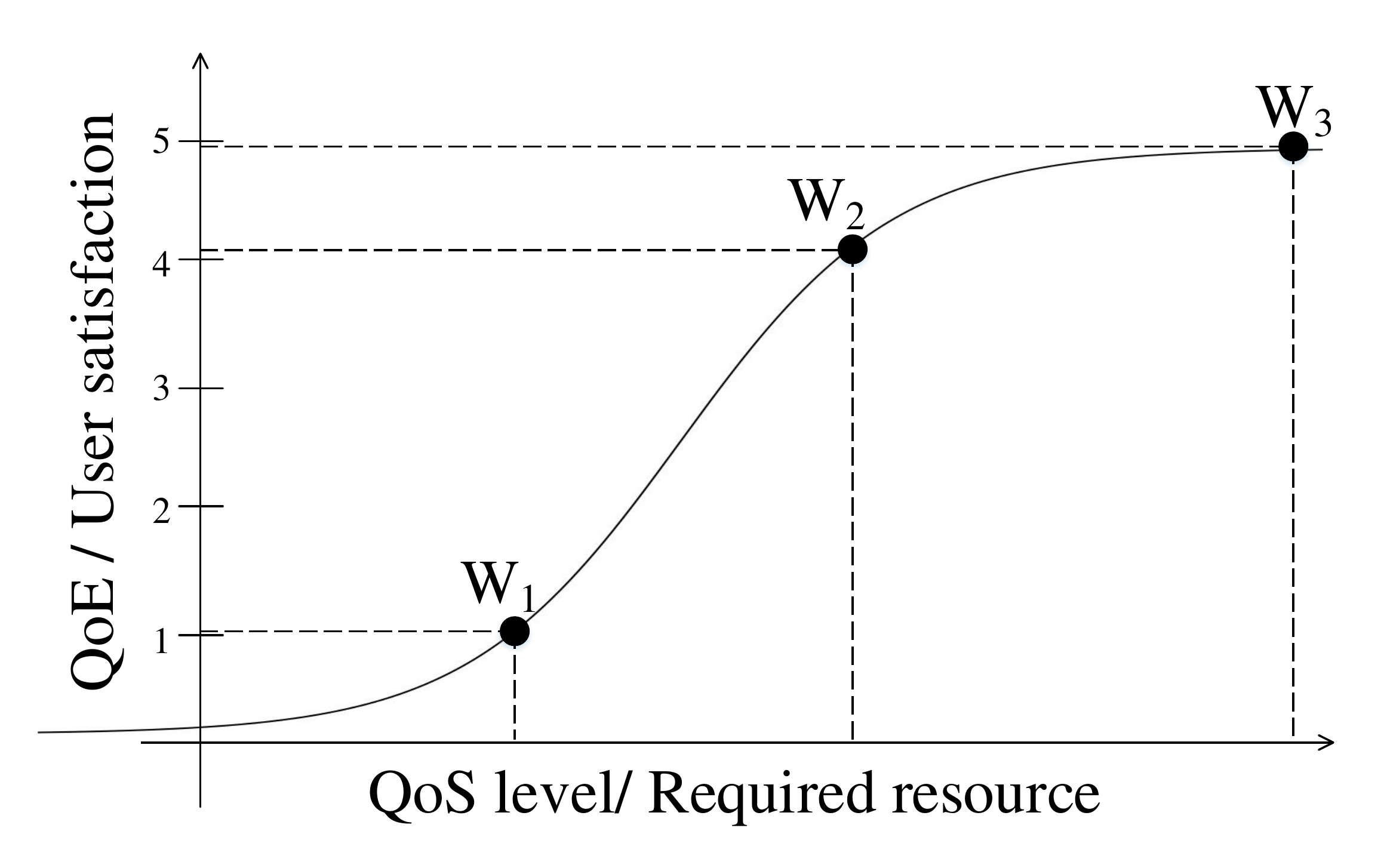}
		\caption{Quality of Experience - Quality of Service correlation}
		\label{fig:QoE_QoS}
	\end{figure}
	
	Naturally, edge computing is immensely dynamic and heterogeneous. Users using the same service have various computing needs and thus require different levels of quality of service (QoS), or computational requirements, ranging from low to high. Tasks with high complexity, e.g. high-definition graphic rendering, eventually consume more computing resources in an edge server. A user's satisfaction, or quality of experience (QoE), varies along with different levels of QoS. Many researchers have found that there is a quantitative correlation between QoS and QoE, as visualized in Fig. \ref{fig:QoE_QoS} \cite{fiedler2010generic, alreshoodi2013survey, hossfeld2011quantification}. At one point, e.g. $ W_3 $, the user satisfaction tends to converge so that the QoE remains virtually unchanged at the highest level regardless of how high the QoS level is. 
	
	Consider a typical game streaming service for example, gaming video frames are rendered on the game vendor's servers then streamed to player's devices. For the majority of players, there is no perceptible difference between 1080p and 1440p video resolution on a mobile device, or even between 1080p and UHD from a distance farther than $1.5$x the screen height regardless of the screen size \cite{lachat2015perception}. Servicing a 1440p or UHD video certainly consumes more resources (bandwidth, processing power), which might be unnecessary since most players are likely to be satisfied with 1080p in those cases. Instead, those resources can be utilized to serve players who are currently unhappy with the service, e.g. those experiencing poor 240p or 360p graphic, or those not able to play at all due to all nearby servers being overloaded. Therefore, the app vendor can lower the QoS requirements of high demanding users, potentially without any remarkable downgrade in their QoE, in order to better service users experiencing low QoS levels. This way, app vendors can maximize users' overall satisfaction measured by their overall QoE. In this context, our research aims at allocating app users to edge servers so that their overall QoE is maximized.
	
	We refer to the above problem as a \textit{dynamic QoS edge user allocation} (EUA) problem. Despite being critical in edge computing, this problem has not been extensively studied. Our main contributions are as follows:
	\begin{itemize}
		\item We define and model the dynamic QoS EUA problem, and prove its $\mathcal{NP}$-hardness.
		\item We propose an optimal approach based on integer linear programming (ILP) for solving the dynamic QoS EUA and develop a heuristic approach for finding sub-optimal solutions to large-scale instances of the problem efficiently.
		\item Extensive evaluations based on a real-world dataset are carried out to demonstrate the effectiveness and efficiency of our approaches against a baseline approach and the state of the art.
	\end{itemize}
	
	The remainder of the paper is organized as follows. Section \ref{sec:motivating_example} provides a motivating example for this research. Section \ref{sec:problem_definition} defines the dynamic QoS problem and proves that it is $\mathcal{NP}$-hard. We then propose an optimal approach based on ILP and an efficient sub-optimal heuristic approach in Sect. \ref{sec:approaches}. Section \ref{sec:evaluation} evaluates the proposed approaches. Section \ref{sec:related_work} reviews the related work. Finally, we conclude the paper in Sect. \ref{sec:conclusion}.
	\newline
	
	\section{Motivating Example}\label{sec:motivating_example}
	
	\begin{figure}
		\centering
		\includegraphics[page=1,scale=0.7]{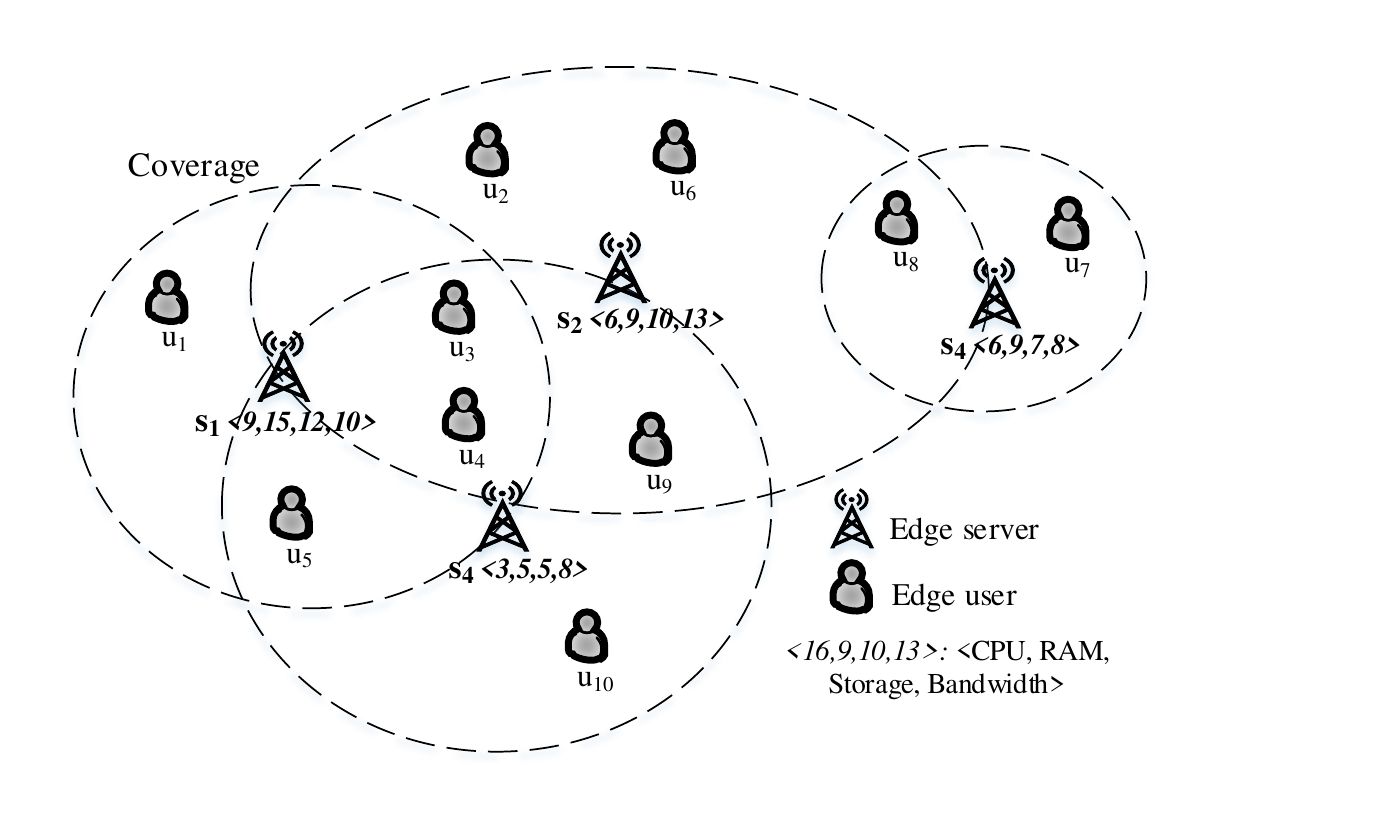}
		\caption{Dynamic QoS EUA example scenario}
		\label{fig:eua_example}
	\end{figure}
	
	Using the game streaming example in Sect. \ref{sec:introduction}, let us consider a simple scenario shown in Fig. \ref{fig:eua_example}. There are ten players $ u_1,...,u_{10} $, and four edge server $ s_1,...,s_4 $. Each edge server has a particular amount of different types of available resources ready to fulfill users' requests. A server's resource capacity or player's resource demand are denoted as a vector $ \langle CPU, RAM, storage, bandwidth \rangle $. The game vendor can allocate its users to nearby edge servers and assign a QoS level to each of them. In this example, there are three QoS levels for the game vendor to choose from, namely $ W_1, W_2 $ and $ W_3 $ (Fig. \ref{fig:QoE_QoS}), which consume $ \langle 1,2,1,2 \rangle $, $ \langle 2,3,3,4 \rangle $, and $ \langle 5,7,6,6 \rangle $ units of $ \langle CPU, RAM, storage, bandwidth \rangle $, respectively. Players' corresponding QoE, measured based on Eq. \ref{eq:qos_model}, are $ 1.6, 4.09 $, and $ 4.99 $, respectively. If the server's available resources are not limited then all players will be able to enjoy the highest QoS level. However, a typical edge server has relatively limited resources so not everyone will be assigned $ W_3 $. The game provider needs to find a player - server - QoS allocation so that the overall user satisfaction, i.e. QoE, is maximized.
	
	Let us assume server $ s_2 $ has already reached its maximum capacity and cannot serve anymore players. As a result, player $ u_8 $ needs to be allocated to server $ s_4 $ along with player $ u_7 $. If player $ u_8 $ is assigned the highest QoS level $ W_3 $, the remaining resources on server $ s_4 $ will suffice to serve player $ u_7 $ with QoS level $ W_1 $. The resulting total QoE of those two players is $ 1.6 + 4.99 = 6.59 $. However, we can see that the released resources from the downgrade from $ W_3 $ to $ W_2 $ allows an upgrade from $ W_1 $ to $ W_2 $. If players $ u_7 $ and $ u_8 $ both receive QoS level $ W_1 $, players' overall QoE is $ 4.09 + 4.09 = 8.18 $, greater than the previous solution.
	
	The scale of the dynamic QoS EUA problem in the real-world scenarios can of course be significantly larger than this example. Therefore, it is not always possible to find an optimal solution in a timely manner, hence the need for an efficient yet effective approach for finding a near-optimal solution to this problem efficiently.
	\newline	
	
	\section{Problem Formulation}\label{sec:problem_formulation}
	\subsection{Problem Definition}\label{sec:problem_definition}
	This section defines the dynamic QoS EUA problem. Table \ref{table:notations} summarizes the notations and definitions used in this paper. Given a finite set of $ m $ edge servers $ \mathcal{S} = \{s_{1},s_{2},...,s_{m}\} $, and $ n $ users $ \mathcal{U} = \{u_{1},u_{2},...,u_{n}\} $ in a particular area, we aim to allocate users to edge servers so that the total user satisfaction, i.e. QoE, is maximized. In the EUA problem, every user covered by edge servers must be allocated to an edge server unless all the servers accessible for the user have reached their maximum resource capacity. If a user cannot be allocated to any edge servers, or is not positioned within the coverage of any edge servers, they will be directly connected to the app vendor's central cloud server.
	
	\begin{table}[H]
		\caption{Key Notations}
		\label{table:notations}
		\begin{tabular}{l|p{8.65cm}}
			\hline
			Notation & Description \\
			\hline
			$ \mathcal{S} = \{s_{1},s_{2},...,s_{m}\} $ & finite set of edge server $ s_{j} $, where $ j = {1,2,...,m} $ \\
			\hline
			\vtop{\hbox{\strut $ \mathcal{D} = \{CPU, RAM,$}\hbox{\strut $ storage, bandwidth\} $}} & a set of computing resource dimension\\
			\hline
			$ c_{j} = \langle c_{j}^1, c_{j}^2, ..., c_{j}^d  \rangle$ & $ d- $dimensional vector with each dimension $ c_{j}^k $ being a resource type, such as CPU or storage, representing the available resources of an edge server $ s_{j} $, $ k \in \mathcal{D} $ \\
			\hline
			$ \mathcal{U} = \{u_{1},u_{2},...,u_{n}\} $ & finite set of user $ u_{i} $, where $ i = {1,2,...,n} $ \\
			\hline
			$ \mathcal{W} = \{W_1,W_2,...,W_q\} $ & a set of predefined resource level $ W_l $, where $ l = 1,2,...,q$. A higher resource level requires more resource than a lower one $ W_l < W_{l+1}.$ We will also refer to a resource level as a QoS level.\\
			\hline
			$ w_{i} = \langle w_{i}^1,w_{i}^2,...,w_{i}^d  \rangle $ & $ d- $dimensional vector representing the resource amount demanded by user $ u_{i} $. Each vector component $ w_{i}^k $ is a resource type, $ k \in \mathcal{D} $. Each user can be assigned a resource level $w_i \in W $ \\
			\hline
			$ \mathcal{U}(s_{j}) $ & set of users allocated to server $ s_{j} $, $ \mathcal{U}(s_{j}) \subseteq \mathcal{U} $ \\
			\hline
			$ \mathcal{S}(u_{i}) $ & set of user $ u_{i} $'s candidate servers -- edge servers that cover user $ u_{i} $, $ \mathcal{S}(u_{i}) \subseteq \mathcal{S} $ \\
			\hline
			$ s_{u_i} $ & edge server assigned to serve user $ u_i $, $ s_{u_i} \in \mathcal{S}$ \\
			\hline
			$ cov(s_{j}) $ & coverage radius of server $ s_{j} $ \\
			\hline
		\end{tabular}
	\end{table}
	
	A user $ u_i $ can only be allocated to an edge server $ s_j $ if they are located within $ s_j $'s coverage area $ cov(s_j) $. We denote $ \mathcal{S}_{u_i} $ as the set of all user $ u_i $'s candidate edge servers -- those that cover user $ u_i $. Take Fig. \ref{fig:eua_example} for example, users $ u_3 $ and $ u_4 $ can be served by servers $ s_1, s_2 $, or $ s_3 $. Server $ s_1 $ can serve users $ u_1, u_3, u_4 $, and $ u_5 $ as long as it has adequate resources.
	
	\begin{equation}\label{eq:edge_coverage}
	u_i \in cov(s_j), \forall u_i \in \mathcal{U}; \forall s_j \in \mathcal{S}
	\end{equation}
	
	If a user $ u_i $ is allocated to an edge server, they will be assigned a specific amount of computing resources $ w_i = (w_i^d) $, where each dimension $ d \in \mathcal{D} $ represents a type of resource, e.g. CPU, RAM, storage, or bandwidth. $ w_i $ is selected from a predetermined set $ \mathcal{W} $ of $ q $ resource levels, ranging from low to high. Each of those resource levels corresponds to a QoS level. The total resources assigned to all users allocated to an edge server must not exceed the available resources on that edge server. The available computing resources on an edge server $ s_j, s_j \in \mathcal{S} $ are denoted as $ c_j = (c_j^d), d \in \mathcal{D} $. In Fig. \ref{fig:eua_example}, users $ u_1, u_3, u_4 $, and $ u_5 $ cannot all receive QoS level $ W_3 $ on server $ s_1 $ because the total required resources would be $ \langle 20,28,24,24 \rangle $, exceeding server $ s_1 $'s available resources $ \langle 9,15,12,10 \rangle $.
	
	\begin{equation} \label{eq:total_capacity}
	\sum\limits_{u_{i} \in \mathcal{U}(s_{j})} w_{i} \leq c_{j}, \quad \forall s_{j} \in \mathcal{S}
	\end{equation}
	
	Each user $ u_i $'s assigned resource $ w_i $ corresponds to a QoS level that results in a different QoE level. As stated in \cite{fiedler2010generic, alreshoodi2013survey, hossfeld2011quantification}, QoS is non-linearly correlated with QoE. When the QoS reaches a specific level, a user's QoE improves very trivially regardless of a noticeable increase in the QoS. For example, in the model in Fig. \ref{fig:QoE_QoS}, the QoE gained from the $ W_2 - W_3 $ upgrade is nearly 1. In the meantime, the QoE gained from the $ W_1 - W_2 $ upgrade is approximately 3 at the cost of a little extra resource.	Several works model the correlation between QoE and QoS using the sigmoid function \cite{hemmati2017qoe, shenker1995fundamental, hande2007distributed}. In this research, we use a logistic function (Equation \ref{eq:qos_model}), a generalized version of the sigmoid function, to model the QoS - QoE correlation. This gives us more control over the QoE model, including QoE growth rate, making the model more generalizable to different domains.
	
	\begin{equation}\label{eq:qos_model}
	E_i = \dfrac{L}{1 + e^{-\alpha(x_i- \beta)}}
	\end{equation}
	where $ L $ is the maximum value of QoE, $ \beta $ controls where the QoE growth should be, or the mid-point of the QoE function, $ \alpha $ controls the growth rate of the QoE level (how steep the change from the minimum to maximum QoE level is), $ E_i $ represents the QoE level given user $ u_i $'s QoS level $ w_i $, and $ x_i = \dfrac{\sum_{k \in \mathcal{D}}w_i^k}{|\mathcal{D}|} $. We let $ E_i = 0 $ if user $ u_i $ is unallocated. 
	
	Our objective is to find a user-server assignment $ \{ u_1,...,u_n \} \longrightarrow \{ s_1,...,s_m \} $ with their individual QoS levels $ \{ w_1,...,w_n \} $ in order to maximize the overall QoE of all users:
	
	\begin{equation}\label{eq:max_qoe}
	maximize \hspace{5pt} \sum_{i=1}^{n} E_i
	\end{equation}
	
	\subsection{Problem Hardness}\label{sec:problem_hardness}
	We can prove that the dynamic QoS EUA problem defined above is $\mathcal{NP}$-hard by proving that its associated decision version is $\mathcal{NP}$-complete. The decision version of dynamic QoS EUA is defined as follows:
	
	Given a set of demand workload $ \mathcal{L} = \{ w_1,w_2,...,w_n \} $ and a set of server resource capacity $ \mathcal{C} = \{c_1,c_2,...,c_m\} $; for each positive number $ Q $ determine whether there exists a partition of $ \mathcal{L}' \subseteq \mathcal{L} $ into $ \mathcal{C}' \subseteq \mathcal{C} $ with aggregate QoE greater than $ Q $, such that each subset of $ \mathcal{L}' $ sums to at most $ c_j, \forall c_j \in \mathcal{C}' $, and the constraint (\ref{eq:edge_coverage}) is satisfied. By repeatedly answering the decision problem, with all feasible combination of $ w_i \in \mathcal{W}, \forall i \in \{1,...,n\} $, it is possible to find the allocation that produces the maximum overall QoE.
	
	\begin{theorem}
		The dynamic QoS EUA problem is $\mathcal{NP}$.
	\end{theorem}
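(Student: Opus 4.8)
The plan is to read the claim as the statement that the decision version of dynamic QoS EUA (the problem stated just above, with the workload list $\mathcal{L}$ given as input) is $\mathcal{NP}$-complete, and hence that the optimization problem (\ref{eq:max_qoe}) is $\mathcal{NP}$-hard. I would prove it in two parts.

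\emph{Membership in $\mathcal{NP}$.} I would take as certificate an allocation: the partial map $u_i \mapsto s_{u_i}$ together with the chosen level $w_i$ of every allocated user. A verifier checks in time linear in the input that (i) each allocated $u_i$ lies in $cov(s_{u_i})$, so (\ref{eq:edge_coverage}) holds; (ii) for every server $s_j$, $\sum_{u_i \in \mathcal{U}(s_j)} w_i \le c_j$ in every resource dimension, so (\ref{eq:total_capacity}) holds; and (iii) $\sum_i E_i$, with each $E_i$ evaluated from (\ref{eq:qos_model}), exceeds $Q$. All three checks are polynomial, so the decision problem lies in $\mathcal{NP}$.

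\emph{$\mathcal{NP}$-hardness.} I would reduce from \textsc{Partition}. Given positive integers $a_1,\dots,a_n$ with $\sum_i a_i = 2B$, build an instance with a single resource dimension ($|\mathcal{D}|=1$), two servers $s_1,s_2$ of capacity $c_1=c_2=B$, coverage radii large enough that every user is covered by both servers, and workload list $\mathcal{L}=\{a_1,\dots,a_n\}$. User $u_i$ then has the fixed positive value $E_i = L/(1+e^{-\alpha(a_i-\beta)})$, and I would set the threshold to $Q = \sum_{i=1}^{n} E_i - \tfrac{1}{2}\min_i E_i$. Omitting any single user from an allocation forfeits at least $\min_i E_i > \tfrac{1}{2}\min_i E_i$, so an aggregate QoE exceeding $Q$ is attainable if and only if all $n$ users are allocated, i.e.\ $\{a_1,\dots,a_n\}$ can be split between $s_1$ and $s_2$ with each part of total weight at most $B$; since the total weight is $2B$, each part must weigh exactly $B$, which is precisely a yes-instance of \textsc{Partition}. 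The construction is clearly polynomial, so the decision problem is $\mathcal{NP}$-hard and therefore $\mathcal{NP}$-complete; since one call to an optimizer for (\ref{eq:max_qoe}) answers the decision question, dynamic QoS EUA is $\mathcal{NP}$-hard.

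\emph{Expected obstacle.} The certificate and the size of the reduction are routine. The delicate step is calibrating $Q$ so that ``aggregate QoE $>Q$'' encodes exactly ``every user is served'' even though the $E_i$ are non-uniform real numbers; the $\min_i E_i$ gap handles this. One should also note that fixing one level per user in the reduction loses no generality, because the decision version is posed over a given $\mathcal{L}$, with the enumeration over all combinations $w_i\in\mathcal{W}$ absorbed into the outer optimization. If a strong (rather than weak) hardness result were desired, I would substitute \textsc{Bin Packing} with $m$ bins, using $m$ equal-capacity servers and the same threshold device, the rest of the argument being unchanged.
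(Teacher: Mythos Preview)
The theorem you were asked to prove is only the membership claim: the paper splits the $\mathcal{NP}$-completeness argument into two theorems, and this one asserts merely that the decision version lies in $\mathcal{NP}$; the hardness reduction is the separate Theorem~2. Your membership argument---take an allocation as certificate and verify coverage, per-server capacity, and the QoE threshold in polynomial time---is exactly what the paper does (it phrases it as an $\mathcal{O}(mn)$ check), so for the statement at hand your proof is correct and essentially identical to the paper's.

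Your hardness argument is surplus here, but it is worth remarking that it differs from the paper's Theorem~2. The paper reduces from \textsc{Partition} via a two-dimensional gadget: each user $u_i$ is given two QoS-level options $\langle x_i,0\rangle$ and $\langle 0,x_i\rangle$, and identical servers of capacity $\langle C,C\rangle$ with $C=\tfrac{1}{2}\sum_i x_i$; the choice of which coordinate carries $x_i$ encodes the side of the partition. You instead work in a single resource dimension, fix each user's workload to $a_i$, and calibrate the threshold $Q$ so that ``aggregate QoE $>Q$'' forces every user to be served. Your route is arguably more faithful to the decision problem as the paper states it (where the workload list $\mathcal{L}$ is given as input rather than chosen), and it avoids endowing users with instance-specific QoS menus; the paper's gadget, on the other hand, sidesteps any numerical reasoning about the sigmoid values $E_i$. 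Both are valid polynomial reductions from \textsc{Partition}.
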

	\begin{proof}
		Given a solution with $ m $ servers and $ n $ users, we can easily verify its validity in polynomial time $ \mathcal{O}(mn) $ -- ensuring each user is allocated to at most one server, and each server meets the condition of having its users' total workload less or equal than its available resource. Dynamic QoS EUA is thus in $\mathcal{NP}$ class.
	\end{proof}
	
	\begin{theorem}
		\textsc{Partition} $ \le_p $ dynamic QoS EUA. Therefore, dynamic QoS EUA is $\mathcal{NP}$-hard.
	\end{theorem}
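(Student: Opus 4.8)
The plan is to give a polynomial-time many-one reduction from \textsc{Partition} to the decision version of dynamic QoS EUA and combine it with the preceding theorem: since the problem lies in $\mathcal{NP}$ and is $\mathcal{NP}$-hard, its decision version is $\mathcal{NP}$-complete and the optimization problem is $\mathcal{NP}$-hard. Given a \textsc{Partition} instance consisting of positive integers $a_1,\dots,a_n$ with $\sum_{i=1}^{n} a_i = 2B$, I would construct an EUA instance with a single resource dimension (replicated identically across all dimensions if $|\mathcal{D}|>1$), exactly two edge servers $s_1,s_2$ with available capacities $c_1=c_2=B$, and $n$ users $u_1,\dots,u_n$ all located inside the overlap of the two coverage areas, so that $\mathcal{S}(u_i)=\{s_1,s_2\}$ for every $i$ and the proximity constraint~(\ref{eq:edge_coverage}) is automatically satisfied. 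The resource-level set $\mathcal{W}$ and the logistic parameters $L,\alpha,\beta$ of Eq.~(\ref{eq:qos_model}) are chosen so that, in any feasible solution in which all users are served, user $u_i$ is effectively pinned to the demand $w_i=a_i$; each served user then contributes a fixed positive QoE and each unserved user contributes $0$. Finally, set the threshold $Q$ just below the total QoE that results when all $n$ users are allocated, so that ``aggregate QoE $>Q$'' is equivalent to ``every user is allocated.''

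Correctness then rests on a simple packing equivalence. Every user must be placed on $s_1$ or $s_2$, and the capacities sum to exactly $2B=\sum_i a_i$; hence all $n$ users can be served if and only if $\{1,\dots,n\}$ splits into a set $I$ and its complement with $\sum_{i\in I} a_i\le B$ and $\sum_{i\notin I} a_i\le B$, which --- the total being $2B$ --- forces both sums to equal $B$. That is exactly the existence of a valid partition, so the constructed EUA instance is a ``yes'' instance precisely when the \textsc{Partition} instance is. The reduction produces $O(n)$ servers, users and numerical parameters, each of size polynomial in the input, and hence runs in polynomial time.

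The step I expect to be the main obstacle is the faithful encoding of the fixed item sizes $a_i$ inside a model in which every user's demand must be picked from the shared, totally ordered set $\mathcal{W}$ of QoS levels: one must argue that the dynamic choice of QoS level cannot be exploited, i.e.\ that no all-users-allocated solution can beat the intended total QoE and that attaining it forces the demands $a_1,\dots,a_n$ --- for instance via a superincreasing or gadget-based choice of $\mathcal{W}$ that rules out substituting one level for another. Once this pinning is in place, the coverage constraint, the capacity bookkeeping, and the choice of $Q$ are routine. (An essentially identical construction reduces from \textsc{Bin Packing} restricted to two bins, which is the same as \textsc{Partition}.)
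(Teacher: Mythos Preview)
Your skeleton---two servers each of capacity $B=\tfrac12\sum_i a_i$, all users in their common coverage, and a QoE threshold $Q$ that is exceeded precisely when every user is allocated---is the same frame the paper uses. The divergence is in how the per-user item size is attached to a user. You stay in one resource dimension and try to \emph{pin} $u_i$ to the demand $a_i$ through a clever choice of $\mathcal{W}$ and the logistic parameters; the paper instead passes to two resource dimensions and gives user $u_i$ the two QoS options $\langle x_i,0\rangle$ and $\langle 0,x_i\rangle$. Both options have the same component average and hence the same $E_i$ by Eq.~(\ref{eq:qos_model}), so the QoE threshold still reduces to ``all users served,'' while the choice of option (first vs.\ second coordinate) directly encodes which side of the partition $u_i$ lands on.

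The obstacle you flag is genuine and, in your one-dimensional setup, not repairable. Because $E_i$ depends only on the chosen level and is monotone in it, users are interchangeable with respect to QoE: no setting of a shared, totally ordered $\mathcal{W}$ or of $(L,\alpha,\beta)$ can force $u_i$ to request exactly $a_i$ rather than, say, the globally smallest level (which makes ``allocate everyone'' trivially feasible) or the largest one. A superincreasing $\mathcal{W}$ does not change this, since the model gives every user access to every level. The paper's two-dimensional per-user menu $\{\langle x_i,0\rangle,\langle 0,x_i\rangle\}$ is precisely the device that breaks this symmetry---it takes a mild liberty with the global-$\mathcal{W}$ formulation, but it is the missing idea your proposal needs.
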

	
	\begin{proof} 
		We can prove that the dynamic QoS EUA problem is $\mathcal{NP}$-hard by reducing the \textsc{Partition} problem, which is $\mathcal{NP}$-complete \cite{garey2002computers}, to a specialization of the dynamic QoS EUA decision problem.
		\begin{definition}
			(\textsc{Partition}) Given a finite sequence of non-negative integers $ \mathcal{X} = (x_1, x_2,..., x_n) $, determine whether there exists a subset $ \mathcal{S} \subseteq \{1,...,n\} $ such that $ \sum_{i \in \mathcal{S}}x_i = \sum_{j \notin \mathcal{S}}x_j $.
		\end{definition}
		Each user $ u_i $ can be either unallocated to any edge server, or allocated to an edge server with an assigned QoS level $ w_i \in \mathcal{W} $. For any instance $ \mathcal{X} = (x_1, x_2,..., x_n) $ of \textsc{Partition}, construct the following instance of the dynamic QoS problem: there are $ n $ users, where each user $ u_i $ has two 2-dimensional QoS level options, $ \langle x_i,0 \rangle $ and $ \langle 0,x_i \rangle $; and a number of identical servers whose size is $ \langle C,C \rangle $, where $ C = \dfrac{\sum_{i=1}^{n}x_i}{2} $. Assume that all users can be served by any of those servers. Note that $ \langle x_i,0 \rangle \equiv \langle 0,x_i \rangle \equiv w_i $. Clearly, there is a solution to dynamic QoS EUA that allocates $ n $ users to two servers \textit{if and only if} there is a solution to the \textsc{Partition} problem. Because this special case is $\mathcal{NP}$-hard, and being $\mathcal{NP}$, the general decision problem of dynamic QoS EUA is thus $\mathcal{NP}$-complete. Since the optimization problem is at least as hard as the decision problem, the dynamic QoS EUA problem is $\mathcal{NP}$-hard, which completes the proof.
		\newline
	\end{proof}
	
	\section{Our Approach}\label{sec:approaches}
	We first formulate the dynamic QoS EUA problem as an integer linear programming (ILP) problem to find its optimal solutions. After that, we propose a heuristic approach to efficiently solve the problem in large-scale scenarios.
	
	\subsection{Integer Linear Programming Model}\label{sec:approaches_ilp}
	From the app vendor's perspective, the optimal solution to the dynamic QoS problem must achieve the greatest QoE over all users while satisfying a number of constraints. The ILP model of the dynamic QoS problem can be formulated as follows:
	
	\begin{flalign}
	\text{maximize }   	&	\sum\limits_{i=1}^{n}\sum\limits_{j=1}^{m}\sum\limits_{l=1}^{q} E_l x_{ijl}	&&	\label{eq:ilp_max_qoe}\\
	\text{subject to: } &	x_{ijl} = 0	&&	\forall l \in \{1,...,q\}, \forall i,j \in \{i,j|u_i \notin cov(s_j)\} 		\label{eq:ilp_proximity_constraint}\\
	&	\sum\limits_{i=1}^{n} \sum\limits_{l=1}^{q} W_{l}^{k}x_{ijl} \leq c_{j}^{k}	&&	
	\forall j \in \{1,...,m\},\forall k \in \{1,...,d\}	\label{eq:ilp_resource_constraint}\\
	&	\sum\limits_{j=1}^{m} \sum\limits_{l=1}^{q} x_{ijl} \leq	1	&&	\forall i \in \{1,...,n\}	\label{eq:ilp_one_option_constraint}\\
	&	x_{ijl} \in \{0,1\}	&&	\forall i \in \{1,...,n\}, \forall j \in \{1,...,m\}, \forall l \in \{1,...,q\} \notag\label{eq:ilp_domain_constraint}
	\end{flalign}
	
	$ x_{ijl} $ is the binary indicator variable such that,
	\begin{equation}
	x_{ijl}=
	\begin{cases}
	1, & \text{if user $ u_i $ is allocated to server $ s_j $ with QoS level $ W_l $} \\
	0, & \text{otherwise.}
	\end{cases}
	\end{equation}
	
	The objective (\ref{eq:ilp_max_qoe}) maximizes the total QoE of all allocated users. In (\ref{eq:ilp_max_qoe}), the QoE level $ E_l $ can be pre-calculated based on the predefined set $ \mathcal{W} $ of QoS levels $ W_l, \forall l \in \{1,...,q\}$. Constraint (\ref{eq:ilp_proximity_constraint}) enforces the \textit{proximity constraints}. Users not located within a server's coverage area will not be allocated to that server. A user may be located within the overlapping coverage area of multiple edge servers. \textit{Resource constraint} (\ref{eq:ilp_resource_constraint}) makes sure that the aggregate resource demands of all users allocated to an edge server must not exceed the remaining resources of that server. Constraint family (\ref{eq:ilp_one_option_constraint}) ensures that every user is allocated to at most one edge server with one QoS level. In other words, a user can only be allocated to either an edge server or the app vendor's cloud server.
	
	By solving this ILP problem with an Integer Programming solver, e.g. IBM ILOG CPLEX\footnote{www.ibm.com/analytics/cplex-optimizer/}, or Gurobi\footnote{www.gurobi.com/}, an optimal solution to the dynamic QoS EUA problem can be found.
	
	\subsection{Heuristic Approach}\label{sec:approaches_greedy}
	However, due to the exponential complexity of the problem, computing an optimal solution will be extremely inefficient for large-scale scenarios. This is demonstrated in our experimental results presented in Sect. \ref{sec:evaluation}. Approximate methods have been proven to be a prevalent technique when dealing with this type of intractable problems. In this section, we propose an effective and efficient heuristic approach for finding sub-optimal solutions to the dynamic QoS problem.
	
	\begin{algorithm}
		\caption{\textsc{Greedy}}\label{alg:greedy}
		\begin{algorithmic}[1]
			\Procedure{AllocateEdgeUsers}{$ \mathcal{S},\mathcal{U} $}
			\For{each $ u_i \in \mathcal{U}$}
			\State $ \mathcal{S}_{u_i} \gets \{s_j \in \mathcal{S} | u_i \in cov(s_j)\} $;
			\If {$ \mathcal{S}_{u_i} \neq \emptyset $}
			\State $ s_{u_i} \gets \argmax_{s_j \in \{0\}\cup\mathcal{S}_{u_i}}\{s_j : c_j \ge W_1\}  $;
			\State $ w_i \gets \argmax_{W_l \in \{0\}\cup\mathcal{W}}\{W_l : W_l \le c_j\}  $;
			\EndIf 
			\EndFor 
			\EndProcedure
		\end{algorithmic}
	\end{algorithm}
	
	The heuristic approach allocates every user $ u_i \in \mathcal{U} $ one by one (line 2). For each user $ u_i $, we obtain the set $ \mathcal{S}_{u_i} $ of all candidate edge servers that cover that user (line 3). If the set $ \mathcal{S}_{u_i} $ is not empty, or user $u_i$ is covered by one or more edge servers, user $u_i$ will then be allocated to the server that has the most remaining resources among all candidate servers (line 5) so that the server will be most likely to have enough resources to accommodate other users. In the meantime, user $ u_i $ is assigned the highest QoS level that can be accommodated by the selected edge server (line 6). 
	
	The running time of this greedy heuristic consists of: (1) iterating through all $ n $ users, which costs $ \mathcal{O}(n) $, and (2) sorting a maximum of $ m $ candidate edge servers for each user, which costs $ \mathcal{O}(m\log m) $, to obtain the server that has the most remaining resources. Thus, the overall time complexity of this heuristic approach is $ \mathcal{O}(nm \log m) $.
	\newline
	
	
	\section{Experimental Evaluation}\label{sec:evaluation}
	In this section, we evaluate the proposed approaches by an experimental study. All the experiments were conducted on a Windows machine equipped with Intel Core i5-7400T processor(4 CPUs, 2.4GHz) and 8GB RAM. The ILP model in Sect. \ref{sec:approaches_ilp} was solved with IBM ILOG CPLEX Optimizer.
	
	\subsection{Baseline Approaches}\label{sec:evaluation_baselines}
	Our optimal approach and sub-optimal heuristic approach are compared to two other approaches, namely a random baseline, and a state-of-the-art approach for solving the EUA problem:
	\begin{itemize}
		\item \emph{Random}: Each user is allocated to a random edge server as long as that server has sufficient remaining resources to accommodate this user and has this user within its coverage area. The QoS level to be assigned to this user is randomly determined based on the server's remaining resources. For example, if the maximum QoS level the server can achieve is $ W_2 $, the user will be randomly assigned either $ W_1 $ or $ W_2 $.
		\item \emph{VSVBP}: \cite{lai2018optimal} models the EUA problem as a variable sized vector bin packing (VSVBP) problem and proposes an approach that maximizes the number of allocated users while minimizing the number of edge servers needs to be used. Since VSVBP does not consider dynamic QoS, we randomly preset users' QoS levels, i.e., resource demands.
	\end{itemize}
	
	\subsection{Experiment Settings}\label{sec:evaluation_expsettings}
	Our experiments were conducted on the widely-used EUA dataset \cite{lai2018optimal}, which includes data of base stations and end-users within the Melbourne central business district area in Australia. In order to simulate different dynamic QoS EUA scenarios, we vary the following three parameters:
	\begin{itemize}
		\item Number of end-users: We randomly select $ 100, 200,..., 1,000 $ users. Each experiment is repeated 100 times to obtain 100 different user distributions so that extreme cases, such as overly sparse or dense distributions, are neutralized.
		\item Number of edge servers: Say the users selected above are covered by $ m $ servers, we then assume $ 10\%, 20\%, ..., 100\% $ of those $ m $ servers are available to accommodate those users.
		\item Server's available resources: The server's available computing resources is generated following a normal distribution $\mathcal{N}(\mu,\,\sigma^{2})$, where $ \sigma = 1 $ and the average resource capacity of each server $ \mu = 5, 10, 15,... 50 $ in each dimension $ d \in \mathcal{D} $.
	\end{itemize}
	Table \ref{tab:exp_settings} summarizes the settings of our three sets of experiments. The possible QoS level, for each user is preset to $ \mathcal{W} = \{\langle 1,2,1,2 \rangle, \langle 2,3,3,4 \rangle, \\ \langle 5,7,6,6 \rangle \} $. For the QoE model, we set $ L = 5, \alpha = 1.5 $, and $ \beta = 2 $. We employ two metrics to evaluate our approaches: (1) overall QoE achieved over all users for effectiveness evaluation, and (2) execution time (CPU time) for efficiency evaluation.
	\begin{table} 
		\caption{Experiment Settings}
		\label{tab:exp_settings}
		\setlength{\tabcolsep}{8pt}
		\begin{tabularx}{\textwidth}{l|l|l|l}
			\hline
			& Number of users & Number of servers & Server's available resources \\
			\hline
			Set \#1 & $ 100,200,...,1000 $ & $ 70\% $ & $ 35 $ \\
			\hline
			Set \#2 & $ 500 $ & $ 10\%, 20\%,..., 100\% $ & $ 35 $ \\
			\hline
			Set \#3 & $ 500 $ & $ 70\% $ & $ 5,10,15,...,50 $ \\
			\hline
		\end{tabularx}
	\end{table}
	
	\subsection{Experimental Results and Discussion}\label{sec:evaluation_results}
	Figures \ref{fig:results_set1}, \ref{fig:results_set2}, and \ref{fig:results_set3} depict the experimental results of three experiment sets 1, 2, and 3, respectively.
	
	\textit{1) Effectiveness:} Figures \ref{fig:results_set1}, \ref{fig:results_set2}, and \ref{fig:results_set3}(a) demonstrate the effectiveness of all approaches in experiment sets 1, 2, and 3, measured by the overall QoE of all users in the experiment. In general, Optimal, being the optimal approach, obviously outperforms other approaches across all experiment sets and parameters. The performance of Heuristic largely depends on the computing resource availability, which will be analyzed in the following section.
	
	\begin{figure*}[!t]
		\centering
		\subfloat[Total QoE]{\includegraphics[width=2.3in]{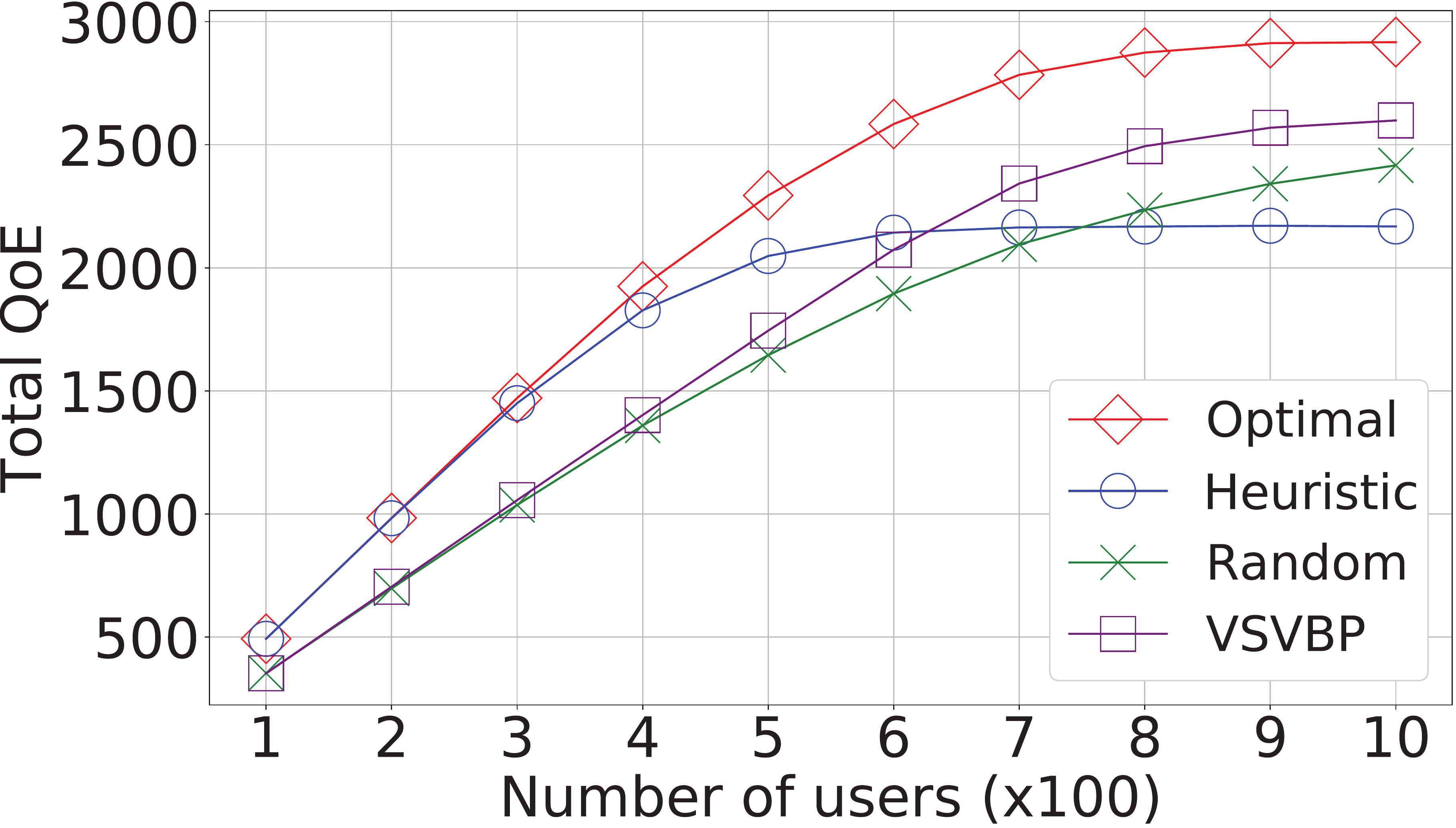}
			\label{fig:total_qoe_set1}}
		\subfloat[Elapsed CPU time]{\includegraphics[width=2.3in]{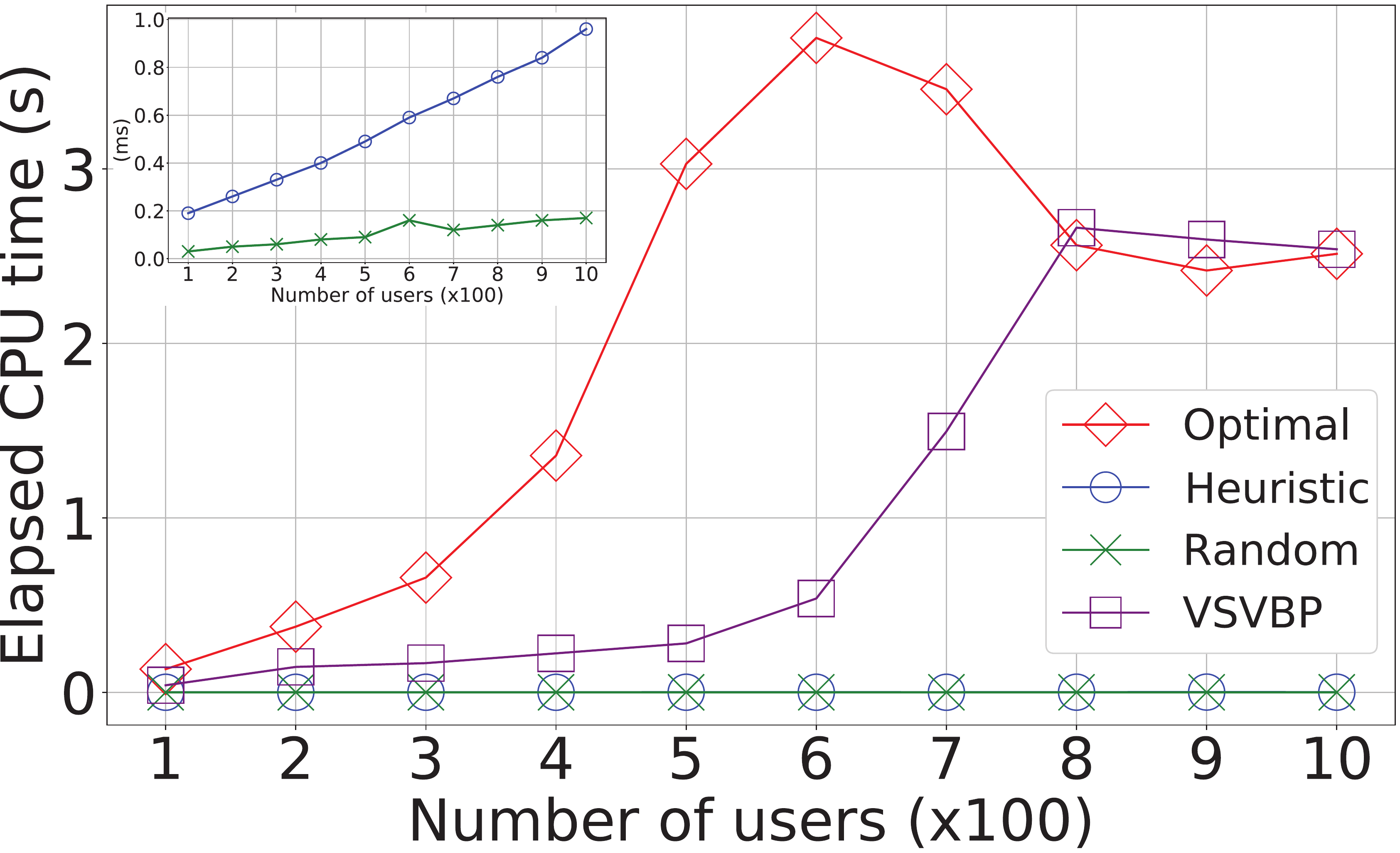}
			\label{fig:cpu_time_set1}}
		\caption{Experiment set \#1 results}
		\label{fig:results_set1}
	\end{figure*}
	
	\begin{figure*}[!t]
		\centering
		\subfloat[Total QoE]{\includegraphics[width=2.3in]{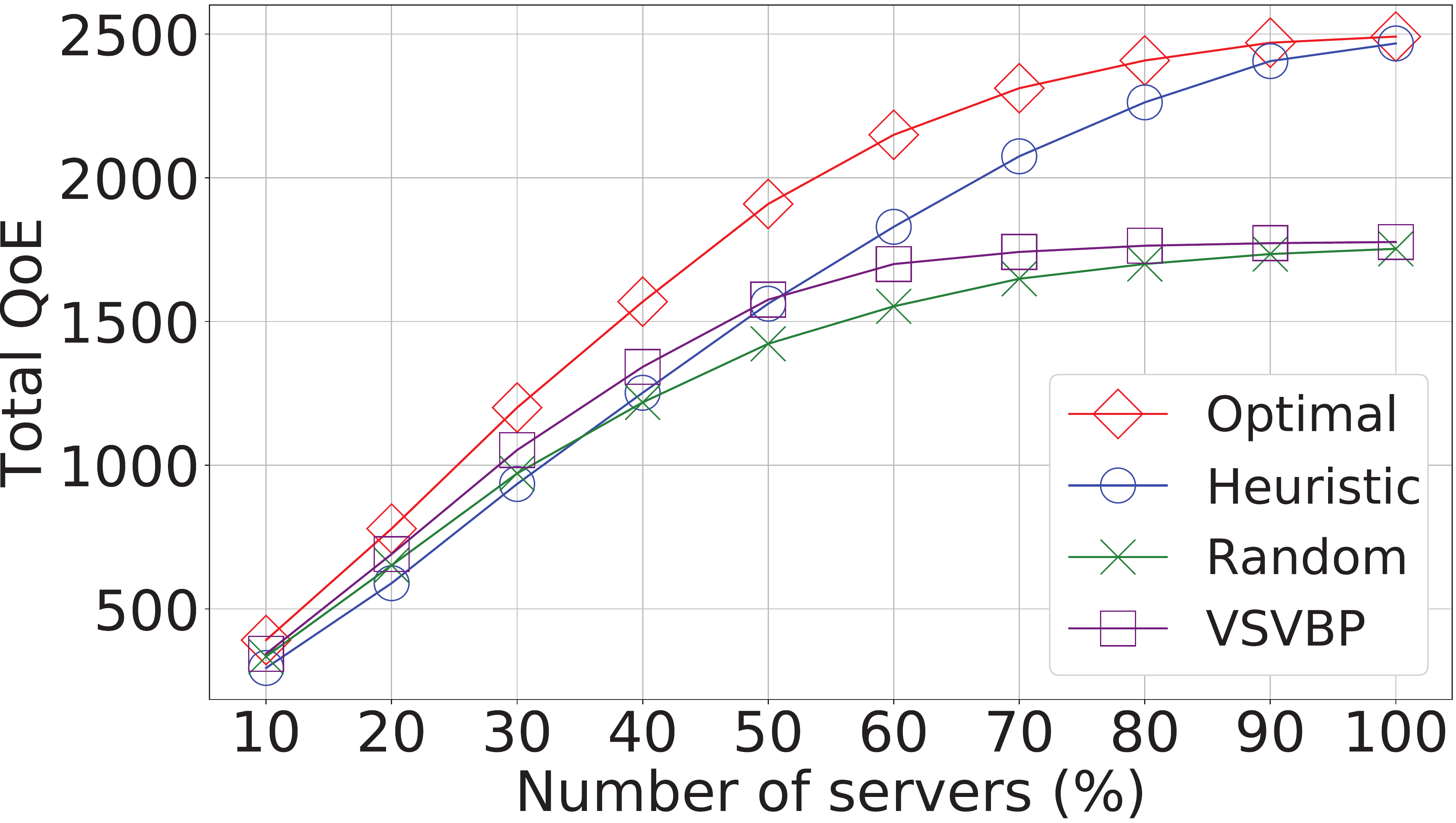}
			\label{fig:total_qoe_set2}}
		\subfloat[Elapsed CPU time]{\includegraphics[width=2.3in]{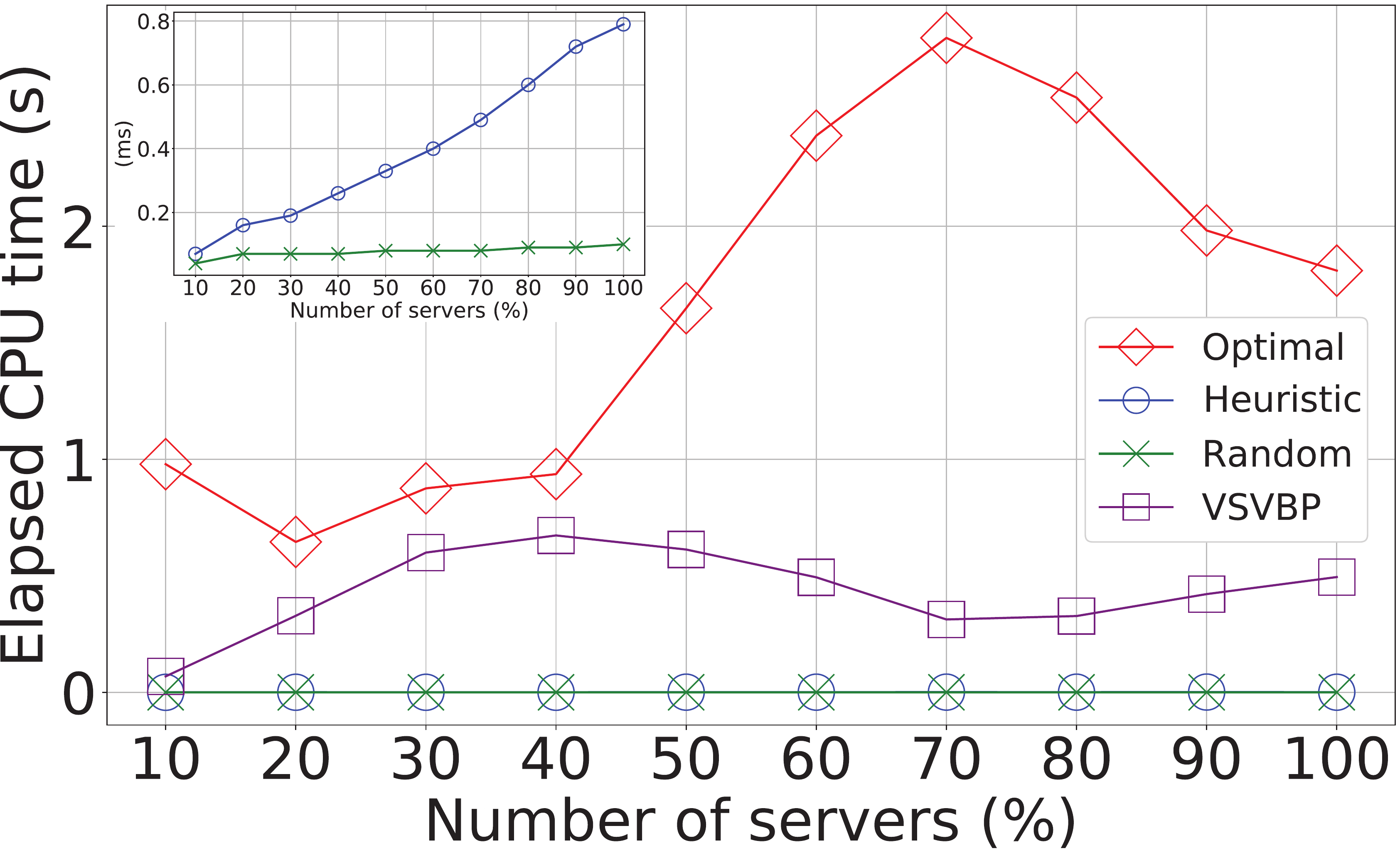}
			\label{fig:cpu_time_set2}}
		\caption{Experiment set \#2 results}
		\label{fig:results_set2}
	\end{figure*}
	
	In experiment set 1 (Fig. \ref{fig:results_set1}(a)), we vary the number of users starting from 100 and ending at 1,000 in steps in 100 users. From 100 to 600 users, Heuristic results in higher total QoE than Random and VSVBP. Especially in the first three steps (100, 200, and 300 users), Heuristic achieves a QoE almost as high as Optimal. This occurs in those scenarios because the available resource is redundant and therefore almost all users receive the highest QoS level. However, as the number of users continues to increase while the amount of available resources is fixed, the computing resource for each user becomes more scarce, making Heuristic no longer suitable in these situations. In fact, from 700 users onwards, Heuristic starts being outperformed by Random and VSVBP. Due to being a greedy heuristic, Heuristic always tries to exhaust the edge servers' resources by allocating the highest possible QoS level to users, which is not an effective use of resource. For example, one user can achieve a QoE of $ 4.99 $ if assigned the highest QoS level $ W_3 $, which consumes a resource amount of $ \langle 5,7,6,6 \rangle $. That resource suffices to serve two users with QoS levels $ W_1 $ and $ W_2 $, resulting in an overall QoE of $ 1.6 + 4.09 = 5.69 > 4.99 $. Since a user's QoS level is randomly assigned by Random and VSVBP, these two methods are able to user resource more effectively than Heuristic in those specific scenarios.
	
	A similar trend can be observed in experiment sets 2 and 3. In resource-scarce situations, i.e. number of servers ranging from 10\% - 40\% (Fig. \ref{fig:results_set2}(a)), and server's available resources ranging from 5 - 25 (Fig. \ref{fig:results_set3}(a)), Heuristic shows a nearly similar performance to Random and VSVBP (slightly worse in a few cases) for the same reason discussed previously. In those situations, the performance difference between Heuristic and Random/VSVBP is not as significant as seen in experiment set 1 (Fig. \ref{fig:results_set1}(a)). Nevertheless, the difference might be greater if the resources are more limited, e.g. 1,000 users in both experiment sets 2 and 3, an average server resource capacity of 20 in set 2, and 50\% number of servers in set 3.
	
	\begin{figure*}[!t]
		\centering
		\subfloat[Total QoE]{\includegraphics[width=2.3in]{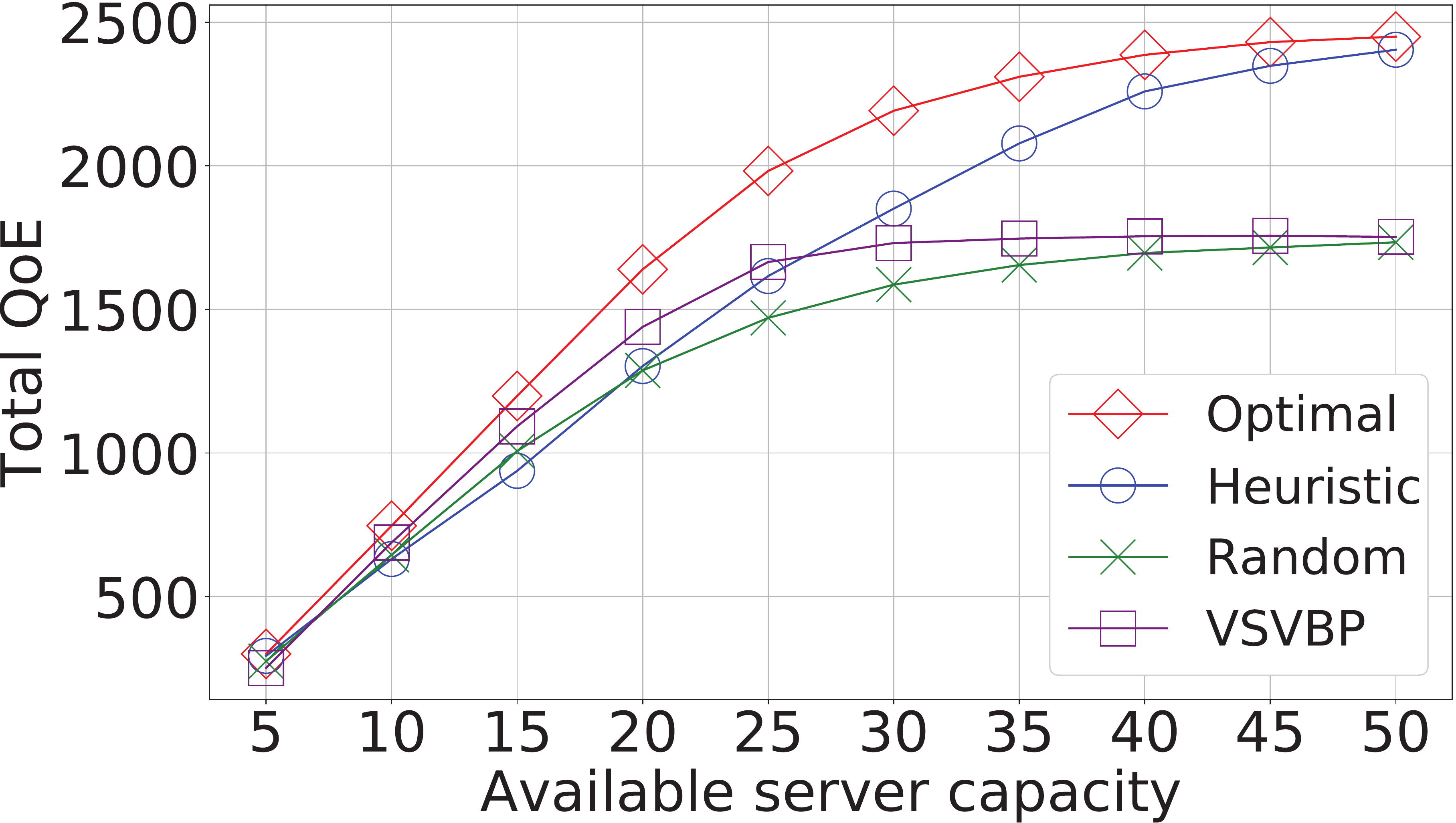}
			\label{fig:total_qoe_set3}}
		\subfloat[Elapsed CPU time]{\includegraphics[width=2.3in]{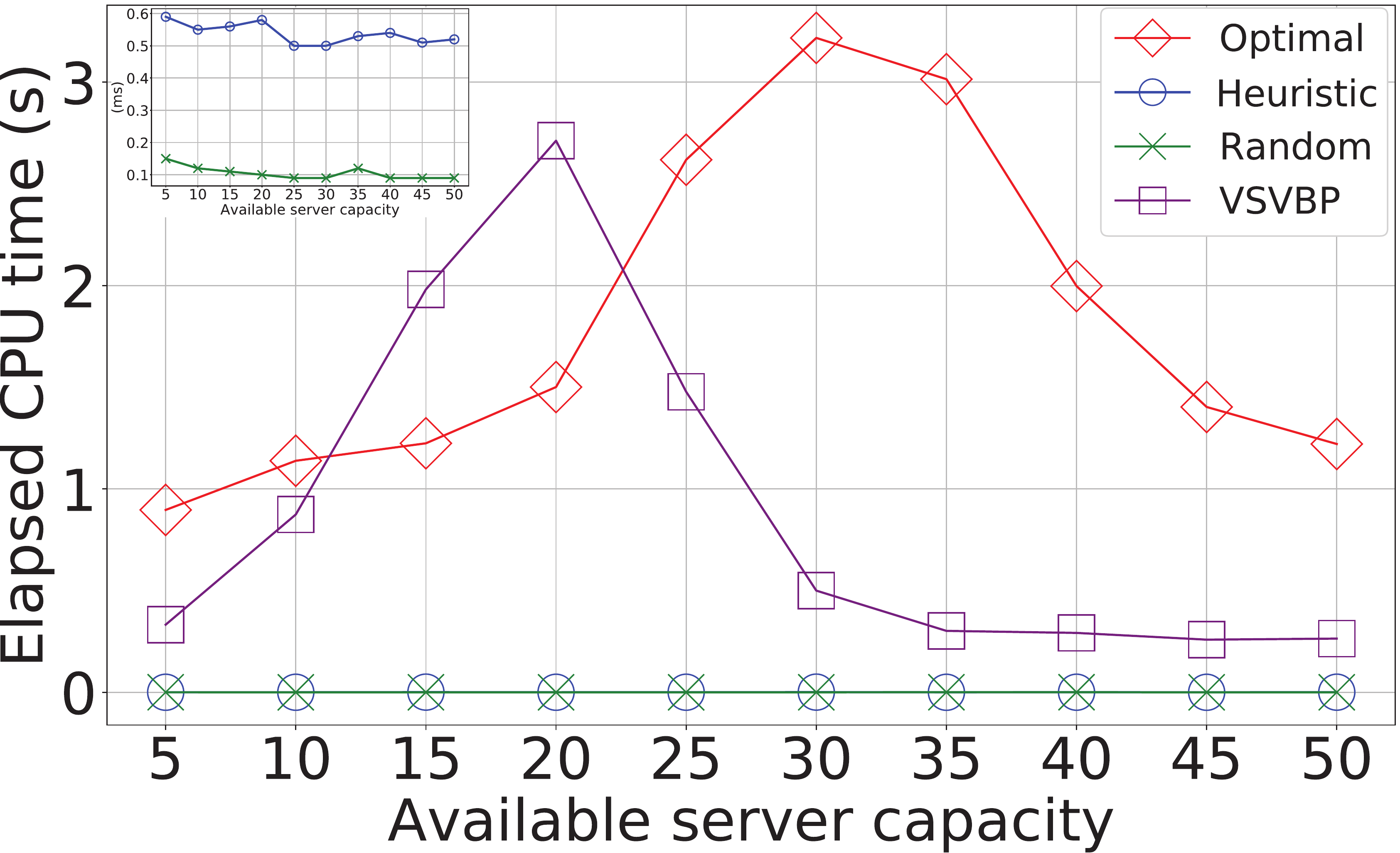}
			\label{fig:cpu_time_set3}}
		\caption{Experiment set \#3 results}
		\label{fig:results_set3}
	\end{figure*}
	
	As discussed above, while being suitable for resource-redundant scenarios, Heuristic has not been proven to be superior when computing resources are limited. This calls for a more effective approach to solve the dynamic QoS problem under resource-scarce circumstances.
	
	\textit{2) Efficiency:} Figures \ref{fig:results_set1}, \ref{fig:results_set2}, and \ref{fig:results_set3}(b) illustrate the efficiency of all approaches in the study, measured by the elapsed CPU time. The execution time of Optimal follows a similar pattern in all three experiment sets. As the experimental parameters increase from the starting point to a point somewhere in the middle -- 600 users in set 1, 70\% number of servers in set 2, and 30 average server resource capacity in set 3 -- the time quickly increases until it reaches a cap of around a hefty 3 seconds due to being $\mathcal{NP}$-hard. The rationale for this is that the complexity of the problem increases as we keep adding up more users, servers, and available resource, generating more possible options and solutions for Optimal to select from. After passing that mid-point, the time gradually decreases at a slower rate then tends to converge. We notice that this convergence is a reflection of the convergence of the total QoE produced by Optimal in each corresponding experiment set. After the experimental parameters passing the point mentioned above, the available resource steadily becomes more redundant so that more users can obtain the highest QoS level without competing with each others, generating less possible options for Optimal, hence running faster. 
	
	In experiment sets 1 and 2, the execution time of Heuristic grows gradually up to just 1 milliseconds. However, it does not grow in experiment set 3 and instead stabilizes around 0.5 - 0.6 milliseconds. This is because the available resource does not impact the complexity of Heuristic, which runs in $ \mathcal{O}(nm \log m) $.
	
	\subsection{Threats to Validity}
	\textit{Threat to construct validity.} The main threat to the construct validity lies in the bias in our experimental design. To minimize the potential bias, we conducted experiments with different changing parameters that would have direct impact on the experimental results, including the number of servers, the number of users, and available resources. The result of each experiment set is the average of 100 executions, each with a different user distribution, to eliminate the bias caused by special cases such as over-dense or over-sparse user distributions.
	
	\textit{Threat to external validity.} A threat to the external validity is the generalizability of our findings in other specific domains. We mitigate this threat by experimenting with different numbers of users and edge servers in the same geographical area to simulate various distributions and density levels of users and edge servers that might be observed in different real-world scenarios.
	
	\textit{Threat to internal validity.} A threat to the internal validity is whether an experimental condition makes a difference or not. To minimize this, we fix the other experimental parameters at a neutral value while changing a parameter. For more sophisticated scenarios where two or more parameters change simultaneously, the results can easily be predicted in general based on the obtained results as we mentioned in Sect. \ref{sec:evaluation_results}.
	
	\textit{Threat to conclusion validity.} The lack of statistical tests is the biggest threat to our conclusion validity. This has been compensated for by comprehensive experiments that cover different scenarios varying in both size and complexity. For each set of experiments, the result is averaged over 100 runs of the experiment.
	\newline
	
	
	\section{Related Work} \label{sec:related_work}
	Cisco \cite{bonomi2012fog} coined the fog computing, or edge computing, paradigm in 2012 to overcome one major drawback of cloud computing -- latency. Edge computing comes with many new unique characteristics, namely location awareness, wide-spread geographical distribution, mobility, substantial number of nodes, predominant role of wireless access, strong presence of streaming and real-time applications, and heterogeneity. Those characteristics allows edge computing to deliver a very broad range of new services and applications at the edge of network, further extending the existing cloud computing architecture.
	
	QoE management and QoE-aware resource allocation have long been a challenge since the cloud computing era and before that \cite{hobfeld2012challenges}. Su et al. \cite{su2016game} propose a game theoretic framework for resource allocation among media cloud, brokers and mobile social users that aims at maximizing user's QoE and media cloud's profit. While having some similarity to our work, e.g. the brokers can be seen as edge servers, there are several fundamental architectural differences. The broker in their work is just a proxy for transferring tasks between mobile users and the cloud, whereas our edge server is where the tasks are processed. In addition, the price for using/hiring the broker/media cloud's resource seems to vary from time to time, broker to broker in their work. We target a scenario where there is no price difference within a single service provider. \cite{he2013cost} investigates the cost - QoE trade-off in virtual machine provisioning problem in a centralized cloud, specific to video streaming domain. QoE is measured by the processing, playback, or downloading rate in those work.
	
	QoE-focused architecture and resource allocation have started gaining attraction in edge computing area as well. \cite{chen2015emc} proposes a novel architecture that integrates resource-intensive computing with mobile application while leveraging mobile cloud computing. Their goal is to provide a new breed of personalized, QoE-aware services. \cite{mahmud2018quality} and \cite{aazam2016mefore} tackle the application placement in edge computing environments. They measure user's QoE based on three levels (low, medium, and high) of access rate, required resources, and processing time. The problem we are addressing, user allocation, can be seen as the step after application placement. \cite{hong2016qoe} focuses on computation offloading scheduling problem in mobile clouds from a networking perspective, where energy and latency must be considered in most cases. They propose a QoE-aware optimal and near-optimal scheduling scheme applied in time-slotted scenarios that takes into account the trade-off between user's mobile energy consumption and latency. 
	
	Apart from the aforementioned literature, there are a number of work on computation offloading or virtual machine placement problem. However, they do not consider QoE, which is important in an edge computing environment where human plays a prominent role. Here, we seek to provide an empirically grounded foundation for the dynamic QoS/QoE edge user allocation problem, forming a solid basis for further developments.
	\newline
	
	
	\section{Conclusion} \label{sec:conclusion}
	App users' quality-of-experience is of great importance for app vendors where user satisfaction is taken seriously. Despite being significant, there is very limited work considering this aspect in edge computing. Therefore, we have identified and formally formulated the dynamic QoS edge user allocation problem with the goal of maximizing users' overall QoE as the first step of tackling the QoE-aware user allocation problem. Having been proven to be $\mathcal{NP}$-hard and also experimentally illustrated, the optimal approach is not efficient once the problem scales up. We therefore proposed a heuristic approach for solving the problem more efficiently. We have also conducted extensive experiments on real-world dataset to evaluate the effectiveness and efficiency of the proposed approaches against a baseline approach and the state of the art. 
	
	Given this foundation of the problem, we have identified a number of possible directions for future work with respect to QoE such as dynamic QoS user allocation in resource-scarce or time-varying situations, user's mobility, service migration, service recommendation, just to name a few. In addition, a finer-grained QoE model with various types of costs or network conditions could be studied next. \newline
	
	\textbf{Acknowledgments.} This research is funded by Australian Research Council Discovery Projects (DP170101932 and DP18010021).
	\newline
	
	\bibliographystyle{splncs04}
	\bibliography{IEEEabrv,ms}

\begin{thebibliography}{10}
\providecommand{\url}[1]{\texttt{#1}}
\providecommand{\urlprefix}{URL }
\providecommand{\doi}[1]{https://doi.org/#1}

\bibitem{aazam2016mefore}
Aazam, M., St-Hilaire, M., Lung, C.H., Lambadaris, I.: Mefore: Qoe based
  resource estimation at fog to enhance qos in iot. In: 2016 23rd International
  Conference on Telecommunications (ICT). pp.~1--5. IEEE (2016)

\bibitem{alreshoodi2013survey}
Alreshoodi, M., Woods, J.: Survey on qoe$\backslash$qos correlation models for
  multimedia services. arXiv preprint arXiv:1306.0221  (2013)

\bibitem{bonomi2012fog}
Bonomi, F., Milito, R., Zhu, J., Addepalli, S.: {Fog computing and its role in
  the internet of things}. In: Proceedings of the first edition of the MCC
  workshop on Mobile cloud computing. pp. 13--16. ACM (2012)

\bibitem{cerwall2018ericsson}
Cerwall, P., Lundvall, A., Jonsson, P., Carson, S., M{\"{o}}ller, R., Jonsson,
  P., Carson, S., Lindberg, P., {\"{O}}hman, K., Sorlie, I., Queir{\'{o}}s, R.,
  Muller, F., Englund, L., Arvedson, M., Carlsson, A.: {Ericsson mobility
  report}. Ericsson, Stockholm  (2018),
  \url{https://www.ericsson.com/en/mobility-report/reports/november-2018}

\bibitem{chen2015emc}
Chen, M., Zhang, Y., Li, Y., Mao, S., Leung, V.C.: Emc: Emotion-aware mobile
  cloud computing in 5g. IEEE Network  \textbf{29}(2),  32--38 (2015)

\bibitem{chen2015decentralized}
Chen, X.: {Decentralized computation offloading game for mobile cloud
  computing}. IEEE Transactions on Parallel and Distributed Systems
  \textbf{26}(4),  974--983 (2015)

\bibitem{ding2008application}
Ding, B., Chen, L., Chen, D., Yuan, H.: Application of rtls in warehouse
  management based on rfid and wi-fi. In: 2008 4th International Conference on
  Wireless Communications, Networking and Mobile Computing. pp.~1--5. IEEE
  (2008)

\bibitem{fiedler2010generic}
Fiedler, M., Hossfeld, T., Tran-Gia, P.: A generic quantitative relationship
  between quality of experience and quality of service. IEEE Network
  \textbf{24}(2),  36--41 (2010)

\bibitem{garey2002computers}
Garey, M.R., Johnson, D.S.: Computers and intractability, vol.~29. wh freeman
  New York (2002)

\bibitem{hande2007distributed}
Hande, P., Zhang, S., Chiang, M.: Distributed rate allocation for inelastic
  flows. IEEE/ACM Transactions on Networking (TON)  \textbf{15}(6),  1240--1253
  (2007)

\bibitem{he2013cost}
He, J., Wen, Y., Huang, J., Wu, D.: On the cost--qoe tradeoff for cloud-based
  video streaming under amazon ec2's pricing models. IEEE Transactions on
  Circuits and Systems for Video Technology  \textbf{24}(4),  669--680 (2013)

\bibitem{hemmati2017qoe}
Hemmati, M., McCormick, B., Shirmohammadi, S.: Qoe-aware bandwidth allocation
  for video traffic using sigmoidal programming. IEEE MultiMedia
  \textbf{24}(4),  80--90 (2017)

\bibitem{hobfeld2012challenges}
Hobfeld, T., Schatz, R., Varela, M., Timmerer, C.: Challenges of qoe management
  for cloud applications. IEEE Communications Magazine  \textbf{50}(4),  28--36
  (2012)

\bibitem{hong2016qoe}
Hong, S.T., Kim, H.: Qoe-aware computation offloading scheduling to capture
  energy-latency tradeoff in mobile clouds. In: 2016 13th Annual IEEE
  International Conference on Sensing, Communication, and Networking (SECON).
  pp.~1--9. IEEE (2016)

\bibitem{hossfeld2011quantification}
Ho{\ss}feld, T., Seufert, M., Hirth, M., Zinner, T., Tran-Gia, P., Schatz, R.:
  Quantification of youtube qoe via crowdsourcing. In: 2011 IEEE International
  Symposium on Multimedia. pp. 494--499. IEEE (2011)

\bibitem{hu2015mobile}
Hu, Y.C., Patel, M., Sabella, D., Sprecher, N., Young, V.: Mobile edge
  computing—a key technology towards 5g. ETSI white paper  \textbf{11}(11),
  1--16 (2015)

\bibitem{lachat2015perception}
Lachat, A., Gicquel, J.C., Fournier, J.: How perception of ultra-high
  definition is modified by viewing distance and screen size. In: Image Quality
  and System Performance XII. vol.~9396, p. 93960Y. International Society for
  Optics and Photonics (2015)

\bibitem{lai2018optimal}
Lai, P., He, Q., Abdelrazek, M., Chen, F., Hosking, J., Grundy, J., Yang, Y.:
  Optimal edge user allocation in edge computing with variable sized vector bin
  packing. In: International Conference on Service-Oriented Computing. pp.
  230--245. Springer (2018)

\bibitem{mahmud2018quality}
Mahmud, R., Srirama, S.N., Ramamohanarao, K., Buyya, R.: Quality of experience
  (qoe)-aware placement of applications in fog computing environments. Journal
  of Parallel and Distributed Computing  (2018)

\bibitem{shenker1995fundamental}
Shenker, S.: Fundamental design issues for the future internet. IEEE Journal on
  selected areas in communications  \textbf{13}(7),  1176--1188 (1995)

\bibitem{soyata2012cloud}
Soyata, T., Muraleedharan, R., Funai, C., Kwon, M., Heinzelman, W.:
  {Cloud-vision: Real-time face recognition using a mobile-cloudlet-cloud
  acceleration architecture}. In: Computers and communications (ISCC), 2012
  IEEE symposium on. pp. 59--66. IEEE (2012)

\bibitem{su2016game}
Su, Z., Xu, Q., Fei, M., Dong, M.: Game theoretic resource allocation in media
  cloud with mobile social users. IEEE Transactions on Multimedia
  \textbf{18}(8),  1650--1660 (2016)

\end{thebibliography}
	
\end{document}